\newcommand{\dd}{ {\rm{d}}}
\newcommand{\erf}{ {\rm{erf}}}
\newcommand{\erfc}{ {\rm{erfc}}}
\newcommand{\abs}[1]{\left| #1 \right|}
\newcommand{\nhi}{n \!\!\!\!{-}}
\newcommand{\hn}{\nhi}
\newcommand{\hnp}{\nhi'}
\newcommand{\half}[1]{\frac{#1}{2}}
\newcommand{\ii}{\mathrm{i}}
\newcommand{\ee}{\mathrm{e}}
\begin{document}
\title{A modified Euler-Maclaurin formula in 1D and 2D with applications in statistical physics}
\author{Jihong Guo \and Yunpeng Liu$^*$}
\institute{Jihong Guo \at Department of Applied Physics, Tianjin University, Tianjin 300350, P.R.China 
\and Yunpeng Liu \at Department of Applied Physics, Tianjin University, Tianjin 300350, P.R.China
            \and
	    \email{yunpeng.liu@tju.edu.cn}
}
	    \date{Received:date /Accepted:date}
\maketitle
\begin{abstract}
The Euler-Maclaurin summation formula is generalized to a modified form by expanding the periodic Bernoulli polynomials as its Fourier series and taking cuts, which includes both the Euler-Maclaurin summation formula and the Poission summation formula as special cases. By making use of the modified formula, a numerical summation method is obtained and the error can be controlled. The modified formula is also generalized from one dimention to two dimentions. Examples of its applications in statistical physics are also discussed.
\end{abstract}
\keywords{Euler-Maclaurin formula, partition function, square well, quantum rotator}
\section{Introduction} 
Summation is widely used in physics, especially in statistical physics. A traditional but powerful tool to work out a summation is the Euler-Maclaurin formula~\cite{Abramowitz,TomM}, which has lots of generalizations~\cite{rzadkowski,trigub,Diran}. If a real function $f$ is $C^n$ on an interval $[a', b]$ for some positive integer $n$ and integers $a'<b$, then the Euler-Maclaurin formula reads
\begin{eqnarray}
  \sum_{i=a'}^{b}f(i)&=& \int_{a'}^{b} f(x)\dd x +\frac{f(a')+f(b)}{2}-\sum_{r=1}^{\nhi}\frac{B_{2r}}{(2r)!}M_{2r-1}(a',b)+R_n.
  \label{euler}
\end{eqnarray} 
Here $B_{2r}$ are the $2r$-th Bernoulli number and 
the symbols $\nhi$\ and $M$ are defined as
\begin{eqnarray}
   \nhi&=& \left[\frac{n}{2}\right],\label{eq_nhi}\\
   M_{\nu}(a',b)&=& f^{(\nu)}(a')-f^{(\nu)}(b),\nonumber
\end{eqnarray}
respectively, where the brakets in Eq.~(\ref{eq_nhi}) and throughout stand for the floor function.
The remainder term is
\begin{eqnarray} 
  R_n&=& \frac{(-1)^{n+1}}{n!}\int_{a'}^{b}P_{n}(x)f^{(n)}(x)\dd x,
  \label{Rn}
\end{eqnarray}
with $P_{n}$ being the periodic Bernoulli polynomial~\cite{Lehmer}.
Dropping the remainder term $R_n$ in Eq.~(\ref{euler}), one obtains an approximation of the summation on the left hand side. It is widely used for its high efficiency in most cases~\cite{karshon2003euler,kukhtin1991lattice,elliott1998euler}.  However, as most asymptotic expansions, for given $a', b$ and $f$, the remainder term usually does not converge to $0$ as $n\rightarrow+\infty$. In our recent work~\cite{Guo:2018qxg}, a modified Euler-Maclaurin  formula (MEMF) was introduced to keep part of the remainder term 
 in order to achieve higher accuracy. However, the MEMF was expressed 
rather technically than rigorously there. In this paper, we first review the MEMF in Section \ref{sec:md1} in a more rigorous way including an estimation of the upper limit of the remainder term, and then generalize the result to Dimension $2$ in Section \ref{md2}. Some applications of the formulae in statistical physics are given in Section \ref{app}.
\section{Modified Euler-Maclaurin Formula in Dimension 1}   \label{sec:md1}
\begin{theorem}[MEMF in 1D]\label{th_mem_1d}
   If a real function $f$ is $C^{n}$ on $[a, b]$ for some $n \in \mathbb{N}^+$ and $a$, $b \in \mathbb{Z}$ with $a<b$, then for given $m,p\in \mathbb{N}$ with $m\leq b-a$, the summation of $f$ over integer points on $[a,b]$ can be evaluated by the formula
\begin{eqnarray}
   \sum_{i=a}^{b}f(i)&=& \sum_{i=a}^{a'-1}f(i)+\int_{a'}^{b}\frac{\sin((2p+1)\pi x)}{\sin(\pi x)}f(x) \dd x+\frac{f(a')+f(b)}{2}+\sum_{r=1}^{\nhi}(-1)^{r}T_{2r,p}M_{2r-1}(a',b)+R_{mnp}\label{eq_MEL1D}\label{eq_mem_1d}
\end{eqnarray}
with $a'=a+m$ and $T_{s,p}=\frac{2}{(2\pi)^{s}}\left(\zeta (s)-\sum\limits_{k=1}^p\frac{1}{k^{s}}\right)$, where the remainder term is
\begin{eqnarray}
  R_{mnp}&=& (-1)^{n}\int_{a'}^{b}\sum_{|k|>p}\frac{\ee^{\ii 2\pi  kx}}{(\ii 2\pi k)^{n}}f^{(n)}(x){\rm d}x.
  \label{Rnd1}
 \end{eqnarray}
 The $\zeta(s)$\ in the definition of $T_{s,p}$ is the Riemann zeta function.
\end{theorem}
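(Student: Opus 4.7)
The strategy is to apply the classical Euler-Maclaurin formula~(\ref{euler}) to the subinterval $[a',b]$, keeping the remainder term $R_n$ intact, and then to rewrite $R_n$ by expanding the periodic Bernoulli polynomial $P_n$ as its Fourier series. Since $a'=a+m$ and $b$ are integers with $a\le a'\le b$, writing $\sum_{i=a}^{b}f(i)=\sum_{i=a}^{a'-1}f(i)+\sum_{i=a'}^{b}f(i)$ reduces the problem to transforming the standard Euler-Maclaurin expression on $[a',b]$ into the claimed form. Substituting the classical Fourier expansion $P_n(x)=-\frac{n!}{(\ii 2\pi)^n}\sum_{k\ne 0}k^{-n}\ee^{\ii 2\pi k x}$ into Eq.~(\ref{Rn}) yields $R_n=(-1)^n\int_{a'}^{b}\sum_{k\ne 0}\frac{\ee^{\ii 2\pi k x}}{(\ii 2\pi k)^n}f^{(n)}(x)\dd x$, and splitting the sum at $|k|=p$ identifies the tail exactly as $R_{mnp}$.

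The core computational step is to show that the ``low-frequency'' piece $(-1)^n\sum_{0<|k|\le p}\int_{a'}^{b}\frac{\ee^{\ii 2\pi k x}}{(\ii 2\pi k)^n}f^{(n)}(x)\dd x$ unfolds, after $n$ successive integrations by parts, into boundary corrections indexed by $r$ plus a residual integral that recombines with the classical $\int_{a'}^{b}f\,\dd x$. The IBP step is clean because $\ee^{\ii 2\pi k x}=1$ at both integer endpoints, so each stage contributes only a boundary term of the form $-M_{\ell-1}(a',b)/(\ii 2\pi k)^\ell$. Summing over $0<|k|\le p$, the odd-$\ell$ contributions annihilate by the $k\leftrightarrow -k$ symmetry, and the even cases $\ell=2r$ collect as $-\frac{2(-1)^r}{(2\pi)^{2r}}\bigl(\sum_{k=1}^{p}k^{-2r}\bigr)M_{2r-1}(a',b)$. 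Invoking Euler's identity $\zeta(2r)=(-1)^{r+1}(2\pi)^{2r}B_{2r}/(2(2r)!)$ rewrites this coefficient as exactly $-\bigl(\frac{B_{2r}}{(2r)!}+(-1)^r T_{2r,p}\bigr)$, which cancels the original $-\frac{B_{2r}}{(2r)!}$ appearing in Eq.~(\ref{euler}) and produces the stated $(-1)^r T_{2r,p}$ term. The residual integral $\sum_{0<|k|\le p}\int_{a'}^{b}\ee^{\ii 2\pi k x}f(x)\dd x$ then combines with $\int_{a'}^{b}f(x)\dd x$ to give $\int_{a'}^{b}\sum_{k=-p}^{p}\ee^{\ii 2\pi k x}f(x)\dd x$, and the Dirichlet-kernel identity $\sum_{k=-p}^{p}\ee^{\ii 2\pi k x}=\sin((2p+1)\pi x)/\sin(\pi x)$ yields the integrand appearing in Eq.~(\ref{eq_mem_1d}).

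The main obstacle is not conceptual but the careful bookkeeping of signs, powers of $\ii$, and the precise matching of $B_{2r}$, $\zeta(2r)$, and $T_{2r,p}$ via Euler's formula, which is routine but error-prone. A secondary technical issue is justifying the termwise interchange of sum and integral in the Fourier expansion of $P_n$: for $n\ge 2$ the series converges absolutely and uniformly so Fubini applies at once, while for $n=1$ one has $\nhi=0$ making the coefficient sum empty, and the tail $\sum_{|k|>p}(\ii 2\pi k)^{-1}\ee^{\ii 2\pi k x}$ can be handled either by Abel summation, or by a single integration by parts against $f'\in C^0$ that restores absolute convergence.
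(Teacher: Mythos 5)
Your proposal is correct and follows essentially the same route as the paper's proof: split off the first $m$ terms, apply the classical Euler--Maclaurin formula on $[a',b]$, Fourier-expand $P_n$, keep the modes $0<|k|\le p$ and call the tail $R_{mnp}$, integrate by parts $n$ times (the integer endpoints making the boundary terms clean), and recombine via $B_{2r}=\frac{2(-1)^{r+1}\zeta(2r)(2r)!}{(2\pi)^{2r}}$ together with the Dirichlet-kernel identity. One bookkeeping slip worth fixing: the collected even-$\ell$ coefficient $-\frac{2(-1)^{r}}{(2\pi)^{2r}}\sum_{k=1}^{p}k^{-2r}$ equals $+\left(\frac{B_{2r}}{(2r)!}+(-1)^{r}T_{2r,p}\right)$, not its negative; with this sign the cancellation against the original $-\frac{B_{2r}}{(2r)!}$ proceeds exactly as you describe, so the conclusion is unaffected.
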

\begin{proof}
   Two modifications are made to the original Euler-Maclaurin formula. (i) we sum over the first $m$ terms explicitly, which is trival but necessary for some quantum systems at low temperature. (ii) we keep part of the original remainder in our calculations to improve the accuracy of the summation. 
   
   We apply the Euler-Maclaurin formula to the summation on the left hand side after modification (i) and obtain
\begin{eqnarray}  
   \sum_{i=a}^{b}f(i)&=& \sum_{i=a}^{a'-1}f(i)+\int_{a'}^{b} f(x)\dd x +\frac{f(a')+f(b)}{2}-\sum_{r=1}^{\nhi}\frac{B_{2r}}{(2r)!}M_{2r-1}(a',b)+R_{mn}\label{eq_7}
\end{eqnarray}
with $a'=a+m$.
The periodic Bernoulli polynomials of the remainder term in Eq.~(\ref{Rn}) can be expanded via Fourier series~\cite{Luo2010fourier} 
\begin{eqnarray} 
   P_{n}(x)&=& -n!\sum_{k\in Z-\{0\}}\frac{\ee^{\ii 2\pi kx}}{(\ii 2\pi k)^n}.\label{eq_pBp}
\end{eqnarray}
Taking $2p$ terms from the remainder term with $|k|\leq p$ in Eq.~(\ref{eq_pBp}) and assigning the others to a new remainder term $R_{mnp}$, we have
\begin{eqnarray*} 
   R_{mn}&=& 
  (-1)^{n}\int_{a'}^{b}
   \sum_{k\in Z-\{0\}}\frac{\ee^{\ii 2\pi kx}}{(\ii 2\pi k)^n}
  f^{(n)}(x)\dd x 
  = 
  (-1)^{n}
  \int_{a'}^{b}\sum_{0<|k|\le p}\frac{\ee^{\ii 2\pi kx}}{(\ii 2\pi k)^n}
  f^{(n)}(x)\dd x+
  R_{mnp}.
\end{eqnarray*}
Using integration by parts, we have
\begin{eqnarray}
   R_{mn}
  &=& 
  \sum_{l=0}^{n-1}\sum_{0<|k|\le p}\left(\frac{\ii}{2\pi k}\right)^{l+1}M_l(b,a')+ \int_{a'}^b\sum_{0<|k|\le p} \ee^{\ii 2\pi  k x} f(x)\dd x +  R_{mnp}\nonumber\\
  &=& \sum_{r=1}^{\nhi}\sum_{0<k\le p}\frac{2(-1)^{r+1}}{(2\pi k)^{2r}}M_{2r-1}(a',b)+\int_{a'}^b\left(\frac{\sin( (2p+1) \pi x)}{\sin(\pi x)}-1\right)  f(x)\dd x +  R_{mnp}.\label{eq_11}
\end{eqnarray}
Note that
\begin{eqnarray*}
   B_{2r}&=& 2(-1)^{r+1}\frac{\zeta(2r)(2r)!}{(2\pi)^{2r}}.
\end{eqnarray*}
Substituting Eq.~(\ref{eq_11}) into Eq.~(\ref{eq_7}), one obtains Eq.~(\ref{eq_MEL1D}), and the proof is finished.

\end{proof}
Dropping the remainder term $R_{mnp}$\ in Eq.~(\ref{eq_MEL1D}), an $m$-$n$-$p$ cut of the Modified Euler-Maclaurin formula is obtained, which can be used as an approximation of the summation on the left hand side. In practice, the integral in Eq.~(\ref{eq_MEL1D}) can also be calculated by
\begin{eqnarray}
   \int_{a'}^{b}f(x)\frac{\sin((2p+1)\pi x)}{\sin (\pi x) }{\rm d}x&=&\sum_{k=-p}^{p}\int_{a'}^{b}f(x)\ee^{\ii 2\pi kx}{\rm d}x,\label{eq_grating}
\end{eqnarray} 
whose right hand side is a summation of Fourier coefficients of $f$ on $[a,b]$. We remark that the integral kernel $\frac{\sin((2p+1)\pi x)}{\sin (\pi x)}$\ is identical to the multi-splits interference factor of an optical grating. 
\begin{corollary}
   If the conditions in Theorem~\ref{th_mem_1d} hold for all $b>a'$, and all the terms in Eq.~(\ref{eq_MEL1D}) are finite as $b\rightarrow+\infty$, then
\begin{eqnarray}
   \sum_{i=a}^{+\infty}f(i)&=& \sum_{i=a}^{a'-1}f(i)+\int_{a'}^{+\infty}\frac{\sin((2p+1)\pi x)}{\sin(\pi x)}f(x) \dd x+\frac{f(a')}{2}+\sum_{r=1}^{\nhi}(-1)^{r}T_{2r,p}M_{2r-1}(a',+\infty)+R_{mnp},\label{eq_MEL1D_inf}
\end{eqnarray}
with
\begin{eqnarray*}
  R_{mnp}&=& (-1)^{n}\int_{a'}^{+\infty}\sum_{|k|>p}\frac{\ee^{\ii 2\pi kx}}{(\ii 2\pi k)^{n}}f^{(n)}(x){\rm d}x.
  \end{eqnarray*}
\end{corollary}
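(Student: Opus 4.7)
The plan is to derive the infinite-sum version as a direct limit of the finite version, Eq.~(\ref{eq_MEL1D}), as the upper endpoint $b$ ranges over integers with $b\to+\infty$. First I would fix $a$, $a'=a+m$, $n$, and $p$, and observe that for every integer $b>a'$ the hypotheses of Theorem~\ref{th_mem_1d} apply, so the identity
\begin{eqnarray*}
  \sum_{i=a}^{b}f(i)&=& \sum_{i=a}^{a'-1}f(i)+\int_{a'}^{b}\frac{\sin((2p+1)\pi x)}{\sin(\pi x)}f(x)\dd x+\frac{f(a')+f(b)}{2}\\
  & & +\sum_{r=1}^{\nhi}(-1)^{r}T_{2r,p}M_{2r-1}(a',b)+R_{mnp}(b)
\end{eqnarray*}
holds exactly, where I have made the $b$-dependence of the remainder explicit.

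Next I would take $b\to+\infty$ through integers and show that each term on the right-hand side has the limit asserted in the corollary. The hypothesis that every term in Eq.~(\ref{eq_MEL1D}) is finite as $b\to+\infty$ gives me precisely this: the integral converges to the improper integral $\int_{a'}^{+\infty}\frac{\sin((2p+1)\pi x)}{\sin(\pi x)}f(x)\dd x$; the remainder $R_{mnp}(b)$ converges to the improper integral in the stated $R_{mnp}$; and each $M_{2r-1}(a',b)=f^{(2r-1)}(a')-f^{(2r-1)}(b)$ converges to $M_{2r-1}(a',+\infty):=f^{(2r-1)}(a')-\lim_{b\to+\infty}f^{(2r-1)}(b)$, adopting this as the natural definition of the symbol $M$ at $+\infty$. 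Because the left-hand side converges (it is the only remaining term whose limit is not directly postulated, but it is forced by the identity plus finiteness of all other terms), the boundary term $\tfrac{1}{2}f(b)$ must also tend to a limit, and this limit is $0$: subtracting the equation at consecutive integers $b$ and $b+1$ and using $f(b+1)=\sum_{a}^{b+1}f(i)-\sum_{a}^{b}f(i)\to 0$, one deduces $f(b)\to 0$, so $\tfrac{1}{2}(f(a')+f(b))\to\tfrac{1}{2}f(a')$.

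Assembling the limits yields Eq.~(\ref{eq_MEL1D_inf}). The only potential obstacle is a rigorous justification that the three improper objects (the Dirichlet-kernel integral, the remainder integral, and the derivatives at $+\infty$) actually possess limits; however, this is absorbed into the hypothesis that all terms in Eq.~(\ref{eq_MEL1D}) are finite as $b\to+\infty$, so the argument reduces to a bookkeeping of limits rather than any new analytic estimate. Thus no step beyond Theorem~\ref{th_mem_1d} and an elementary passage to the limit is required.
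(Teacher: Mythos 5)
Your proposal is correct and is exactly the argument the paper leaves implicit: the corollary is obtained by applying Theorem~\ref{th_mem_1d} for each integer $b>a'$ and passing to the limit $b\rightarrow+\infty$, with the finiteness hypothesis doing all the analytic work. Your extra step deducing $f(b)\rightarrow 0$ from the convergence of the partial sums (so that $\tfrac{1}{2}(f(a')+f(b))\rightarrow\tfrac{1}{2}f(a')$) is a worthwhile detail the paper does not spell out, but it does not change the route.
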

Obviously, similar conclusions can be drawn for $a'=-\infty$, and for other results in this paper.
\begin{remark}
We reveal the following connections to the Euler-Maclaurin formula and to the Poisson summation formula.       
   \begin{enumerate}
	\item Taking $m=0, p=0$ in Theorem~\ref{th_mem_1d}, Eq.~(\ref{eq_MEL1D}) becomes the original Euler-Maclaurin formula, that is Eq.~(\ref{euler}).
	\item Taking $m=0, n=1$ in Theorem~\ref{th_mem_1d}, if 
	   all the terms in Eq.~(\ref{eq_MEL1D}) converges to finite values  and $R_{01p}$ converges to $0$ as $p\rightarrow +\infty$, then with the help of Eq.~(\ref{eq_grating}), Eq.~(\ref{eq_MEL1D}) becomes
\begin{eqnarray*}
   \sum_{i=a}^{b}f(i)&=& \sum_{k=-\infty}^{+\infty}\int_{a}^{b}f(x)\ee^{\ii 2\pi kx} \dd x+\frac{f(a)+f(b)}{2}.
\end{eqnarray*}
Furthermore, if all the terms in the above equation are finite and the summation on the left hand side converges as $a\rightarrow -\infty$ and $b\rightarrow +\infty$, then it becomes the Poisson summation formula
\begin{eqnarray*}
   \sum_{i=-\infty}^{+\infty}f(i)&=& \sum_{k=-\infty}^{+\infty}\int_{-\infty}^{+\infty}f(x)\ee^{\ii 2\pi kx} \dd x.
\end{eqnarray*}
   \end{enumerate}
\end{remark}
\begin{theorem}[Estimation of the remainder term]\label{th_es_1d}
   If the conditions in Theorem~\ref{th_mem_1d} hold, then the remainder term $R_{mnp}$\ in Eq.~(\ref{Rnd1}) with $n>1$ can be controlled by the following estimations.
   \begin{enumerate}
	\item
\begin{eqnarray*}
	|R_{mnp}|&\leq& R_{mnp}^A\equiv T_{n,p}\int_{a'}^b\abs{f^{(n)}(x)}\dd x.
\end{eqnarray*}
	\item
	   If $n$ is odd, and $f^{(n)}(x)$ is monotone with respect to $x$ on $[a',b]$, then 
	   \begin{eqnarray*}
		\abs{R_{mnp}}&\leq&R_{mnp}^B\equiv{2T_{n+1,p}}\abs{M_{n}(a',b)}.
	   \end{eqnarray*}
	   If $n$ is even, and $f^{(n)}(x)$ is monotone with respect to  $x$ on $[a',b]$, then 
	   \begin{eqnarray*}
		\abs{R_{mnp}}&\leq&R_{mnp}^C\equiv{T_{n+1,p}}\abs{M_{n}(a',b)}.
	   \end{eqnarray*}
   \end{enumerate}
\end{theorem}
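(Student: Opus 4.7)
The plan is to handle bound~(1) directly by the triangle inequality and bounds~(2)--(3) via the second mean value theorem for integrals, after constructing an explicit uniformly bounded antiderivative of the inner sum in Eq.~(\ref{Rnd1}).

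For part~1, the tail series $\sum_{|k|>p}(\ii 2\pi k)^{-n}\ee^{\ii 2\pi k x}$ converges absolutely and uniformly in $x$ because $n>1$, so I would bound it pointwise by $\frac{2}{(2\pi)^n}\sum_{k>p}k^{-n}=T_{n,p}$ and pull this constant outside the integral.

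For parts~2 and~3, the crucial preparatory step is to recast
\[
S_n(x)\equiv\sum_{|k|>p}\frac{\ee^{\ii 2\pi k x}}{(\ii 2\pi k)^{n}}
\]
as a real trigonometric series by pairing $k$ with $-k$: when $n$ is even this yields a pure cosine series with amplitudes of size $\frac{2}{(2\pi)^n k^n}$, and when $n$ is odd a pure sine series with the same amplitudes. A termwise antiderivative $\Phi_n$ then satisfies $|\Phi_n(x)|\le T_{n+1,p}$ uniformly. The integrality of the endpoints is what makes the bound clean: since $a',b\in\mathbb{Z}$, every $\sin(2\pi k a')$ and $\sin(2\pi k b)$ vanishes, forcing $\Phi_n(a')=\Phi_n(b)=0$ in the even case; while every $\cos(2\pi k a')$ and $\cos(2\pi k b)$ equals $1$, forcing $\Phi_n(a')=\Phi_n(b)=c$ for one common constant $c$ with $|c|\le T_{n+1,p}$ in the odd case.

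Because $f^{(n)}$ is monotone on $[a',b]$, Bonnet's second mean value theorem then produces some $\xi\in[a',b]$ with
\[
\int_{a'}^b S_n(x)f^{(n)}(x)\,\dd x=f^{(n)}(a')[\Phi_n(\xi)-\Phi_n(a')]+f^{(n)}(b)[\Phi_n(b)-\Phi_n(\xi)].
\]
Substituting the boundary values of $\Phi_n$ collapses the right hand side to $\Phi_n(\xi)M_n(a',b)$ in the even case and to $[\Phi_n(\xi)-c]M_n(a',b)$ in the odd case. The uniform bounds $|\Phi_n(\xi)|\le T_{n+1,p}$ and $|\Phi_n(\xi)-c|\le 2T_{n+1,p}$ then yield parts~3 and~2 respectively.

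The main obstacle I anticipate is bookkeeping: correctly tracking the powers of $\ii$ and alternating signs when decomposing the complex exponential series into real sine/cosine form, and verifying that the common endpoint value $c$ in the odd case really has modulus at most $T_{n+1,p}$. None of these steps is deep, but each is easy to slip on and must be done consistently for both parities of $n$.
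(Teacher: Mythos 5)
Your proposal is correct, but for parts 2 and 3 it takes a genuinely different route from the paper (part 1 is the same triangle-inequality bound in both). The paper also pairs $k$ with $-k$ to get sine (odd $n$) or cosine (even $n$) modes, but then works mode by mode: it subtracts the constant $f^{(n)}(b)$, splits $[a',b]$ into half-periods of $\sin(2\pi kx)$ (with quarter-periods at the ends for $\cos(2\pi kx)$ in the even case), uses monotonicity of $f^{(n)}$ to see that the resulting sum of sub-integrals is alternating with decreasing magnitudes, and bounds it by its first term, giving $\frac{1}{\pi k}\abs{M_n(a',b)}$ per mode for odd $n$ and $\frac{1}{2\pi k}\abs{M_n(a',b)}$ for even $n$ before summing over $k>p$. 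You instead integrate the whole tail $S_n$ against $f^{(n)}$ in one stroke via Bonnet's second mean value theorem, using the termwise antiderivative $\Phi_n$ with $\abs{\Phi_n}\le T_{n+1,p}$ uniformly (termwise integration being justified by uniform convergence, since $n>1$) together with its values at the integer endpoints: $0$ in the even case, a common constant $c$ with $\abs{c}=T_{n+1,p}$ in the odd case. Both arguments rest on the same monotonicity hypothesis and produce exactly the same constants $2T_{n+1,p}$ and $T_{n+1,p}$; yours packages the paper's alternating/telescoping bookkeeping into one standard theorem and makes transparent why the two parities differ by a factor of two (vanishing versus non-vanishing endpoint values of $\Phi_n$), while the paper's half-period argument is more self-contained in that it invokes no mean value theorem.
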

\begin{proof}
   \ \\
   \begin{enumerate}
	\item Compute
   \begin{eqnarray*}
	\abs{R_{mnp}}&=& 
	\abs{ (-1)^{n}\int_{a'}^{b}\sum_{|k|>p}\frac{\ee^{\ii 2\pi  kx}}{(\ii 2\pi k)^{n}}f^{(n)}(x)\dd x}
	\leq
	\int_{a'}^{b}\sum_{|k|>p}\abs{\frac{\ee^{\ii 2\pi  kx}}{(\ii 2\pi k)^{n}}f^{(n)}(x)}\dd x= T_{n,p}\int_{a'}^{b}\abs{f^{(n)}(x)}\dd x.
   \end{eqnarray*}
\item For odd $n$, we estimate
   \begin{eqnarray*}
	\abs{R_{mnp}}
	&=& \abs{ (-1)^{n}\int_{a'}^{b}\sum_{|k|>p}\frac{\ee^{\ii 2\pi  kx}}{(\ii 2\pi k)^{n}}f^{(n)}(x)\dd x}\\
	&\leq& \frac{1}{(2\pi)^n}\sum_{k>p} \frac{2}{k^n}\abs{\int_{a'}^{b}\sin(2\pi k x)f^{(n)}(x)\dd x}\\
	&=& \frac{1}{(2\pi)^n}\sum_{k>p} \frac{2}{k^n}\abs{\sum_{j=1}^{{2k(b-a')}}\int_{a'+\frac{j-1}{2k}}^{a'+\frac{j}{2k}}\sin(2\pi k x)\left(f^{(n)}(x)-f^{(n)}(b)\right)\dd x}.
   \end{eqnarray*}
   Note that the summation is an alternating summation, we have
   \begin{eqnarray*}
     \abs{R_{mnp}}	&\leq& \frac{1}{(2\pi)^n}\sum_{k>p} \frac{2}{k^n}\abs{\int_{a'}^{a'+\frac{1}{2k}}\sin(2\pi k x)\left(f^{(n)}(x)-f^{(n)}(b)\right)\dd x}\\
	&\leq& \frac{1}{(2\pi)^n}\sum_{k>p} \frac{2}{k^n}\abs{f^{(n)}(a')-f^{(n)}(b)}\frac{1}{\pi k}\\
	&=& 2T_{n+1,p}\abs{M_{n}(a',b)}.
   \end{eqnarray*}
   Similarly, for even $n$, we estimate
   \begin{eqnarray*}
	\abs{R_{mnp}}
	&=& \abs{ (-1)^{n}\int_{a'}^{b}\sum_{|k|>p}\frac{\ee^{\ii 2\pi  kx}}{(\ii 2\pi k)^{n}}f^{(n)}(x)\dd x}\\
	&\leq& \frac{1}{(2\pi)^n}\sum_{k>p} \frac{2}{k^n}\abs{\int_{a'}^{b}\cos(2\pi k x)f^{(n)}(x)\dd x}\\
	&=& \frac{1}{(2\pi)^n}\sum_{k>p} \frac{2}{k^n}\abs{\left(\int_{a'}^{a'+\frac{1}{4k}}+\sum_{j=1}^{{2k(b-a')}-1}\int_{a'+\frac{j-1/2}{2k}}^{a'+\frac{j+1/2}{2k}}+\int_{b-\frac{1}{4k}}^b\right)\cos(2\pi k x)\left(f^{(n)}(x)-f^{(n)}(b)\right)\dd x}\\
	&\leq& \frac{1}{(2\pi)^n}\sum_{k>p} \frac{2}{k^n}\int_{a'}^{a'+\frac{1}{4k}}\cos(2\pi k x)\abs{f^{(n)}(x)-f^{(n)}(b)}\dd x\\
	&\leq& \frac{1}{(2\pi)^n}\sum_{k>p} \frac{2}{k^n}\abs{f^{(n)}(a')-f^{(n)}(b)}\frac{1}{2\pi k}\\
	&=& T_{n+1,p}\abs{M_{n}(a',b)}.
   \end{eqnarray*}
   \end{enumerate}
\end{proof}
\begin{corollary}\label{co_TLM}
   If the conditions in Theorem~\ref{th_mem_1d} hold, and the interval $[a',b]$ can be devided into $l_{n-1}$\ subintervals on each of which $f^{(n-1)}$\ is monotone, then the remainder term in Eq.~(\ref{eq_mem_1d}) with $n>1$ can be controlled as
   \begin{eqnarray*}
	\abs{R_{mnp}}&\leq&\tilde{R}_{mnp}^A\equiv T_{n,p}l_{n-1}\Delta M_{n-1},
   \end{eqnarray*}
   with $\Delta M_{n-1}=M_{max}-M_{min}$, where $M_{max}$ and $M_{min}$\ are the maximum and minimum values of $f^{(n-1)}$ on $[a',b]$, respectively.
\end{corollary}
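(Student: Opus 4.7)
The plan is to reduce the statement to part~(1) of Theorem~\ref{th_es_1d} and then exploit the piecewise monotonicity of $f^{(n-1)}$ to bound the resulting integral of $|f^{(n)}|$. So the first step is to quote
\[
 |R_{mnp}| \;\le\; T_{n,p}\int_{a'}^{b}|f^{(n)}(x)|\,\dd x,
\]
which is the conclusion of Theorem~\ref{th_es_1d}(1). After that, everything is about controlling the $L^1$-norm of $f^{(n)}$ by a variation-type quantity of $f^{(n-1)}$.

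Next, I partition $[a',b]$ into the $l_{n-1}$ subintervals $[c_{j-1},c_j]$, $j=1,\dots,l_{n-1}$, on each of which $f^{(n-1)}$ is monotone. Since $f\in C^n$, we have $f^{(n-1)}\in C^1$ on each subinterval, and monotonicity of $f^{(n-1)}$ forces $f^{(n)}$ to be of constant sign there. Hence the fundamental theorem of calculus gives
\[
 \int_{c_{j-1}}^{c_j}|f^{(n)}(x)|\,\dd x \;=\; \bigl|f^{(n-1)}(c_j)-f^{(n-1)}(c_{j-1})\bigr|.
\]
Each of these differences is at most the oscillation $\Delta M_{n-1}=M_{\max}-M_{\min}$ of $f^{(n-1)}$ on $[a',b]$, because both $f^{(n-1)}(c_j)$ and $f^{(n-1)}(c_{j-1})$ lie in $[M_{\min},M_{\max}]$.

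Summing over $j=1,\dots,l_{n-1}$ then yields
\[
 \int_{a'}^{b}|f^{(n)}(x)|\,\dd x \;\le\; l_{n-1}\,\Delta M_{n-1},
\]
and combining this with the bound quoted in the first step produces exactly $|R_{mnp}|\le T_{n,p}\,l_{n-1}\,\Delta M_{n-1}=\tilde R_{mnp}^A$, which is the claim. There is no real obstacle here; the only point that needs a little care is the justification that monotonicity of $f^{(n-1)}$ on each piece allows $|f^{(n)}|$ to be integrated without an absolute value, and that the per-piece variation is uniformly bounded by $\Delta M_{n-1}$ rather than by something like $M_{\max}-M_{\min}$ on each subinterval separately (which would give the same but with local oscillations).
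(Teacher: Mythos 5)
Your proposal is correct and follows exactly the route the paper intends: the corollary is an immediate consequence of Theorem~\ref{th_es_1d}(1), with the $L^1$-norm of $f^{(n)}$ converted, piece by monotone piece, into the total variation of $f^{(n-1)}$, each piece contributing at most $\Delta M_{n-1}$, so that $\int_{a'}^{b}\abs{f^{(n)}}\dd x\leq l_{n-1}\Delta M_{n-1}$. The paper states the corollary without spelling this out, and your argument supplies precisely the intended justification, including the key observation that monotonicity of $f^{(n-1)}$ forces $f^{(n)}$ to have constant sign on each subinterval.
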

\section{Generalization to Dimension 2} \label{md2} 
\begin{lemma}[MEMF in a rectangle]\label{le_mem_2d}
Let D be the rectangle 
\begin{eqnarray*}
  D=\{(x,y)\in \mathbb{R}^2| a\leqslant x <b, c\leqslant y<d\},
\end{eqnarray*}
\begin{figure}[!hbt]
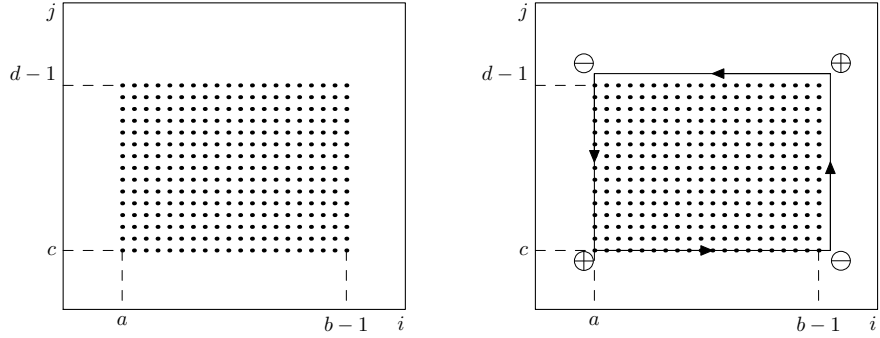
 
    \centering
    \includegraphics[width=0.3\textwidth]{rec_1}\rule{1cm}{0mm}%
    \includegraphics[width=0.3\textwidth]{rec_3}
    \caption{Summation points (left) in Lemma \ref{le_mem_2d} and the corresponding region $D$ with its boundary (right).}
    \label{fg_dog_1}
 \end{figure}
illustrated as in the left panel of Fig.~\ref{fg_dog_1}, where $a, b, c$, and $d$ are integers satisfying $a<b$, $c<d$.
If a function $f: \bar{D}\to \mathbb{R}$ with $\bar{D}=D\cup\partial D$ is $C^{n+n'}$ for some positive integers $n,n'$, then for given non-negative integers $m,m',p,p'$, the summation of $f(x,y)$ on all integer pairs $(i,j) \in D$ can be expressed in the following form
\begin{eqnarray}
   \sum_{\rm{integer\ pairs\ }(i,j) \in D} f(i,j) &=& A+L+V+R,\label{eq_sDp}
\end{eqnarray}
with
\begin{eqnarray}
	A&=& \int_D\dd x\dd y\ a_p^xa_{p'}^yf(x,y),\label{eq_def_A}\\
L&=& \int_{\partial D} -a_p^xB_{n',p'}^y f(x,y){\dd x}+a_{p'}^yB_{n,p}^xf(x,y){\dd y},\\
   V&=&\sum_{(i,j)\in\rm{vertices\ of\ }D} s_{i,j}B_{n,p}^xB_{n',p'}^yf(i,j),\label{eq_V}\\
   R&=& R_{np,n'p'}=\int_D \dd x \dd y\ \left(\left(a_p^x+b_{n,p}^x\right)c_{n',p'}^y+\left(a_{p'}^y+b_{n',p'}^y\right)c_{n,p}^x+c_{n,p}^xc_{n',p'}^y\right) f(x,y),\label{eq_mem_2d_R}
\end{eqnarray} 
and the linear operators
\begin{eqnarray} 
   B_{n,p}^x&=& -\frac{1}{2}+\sum_{r=1}^{\nhi}(-1)^{r+1}T_{2r,p}\frac{\partial^{2r-1}}{\partial x^{2r-1}},\\
   a_p^x&=& \frac{\sin\left((2p+1)\pi x\right)}{\sin(\pi x)},\\
   b_{n,p}^x&=&\frac{\partial}{\partial x}B_{n,p}^x= -\frac{1}{2}\frac{\dd}{\dd x}+\sum_{r=1}^{\nhi}(-1)^{r+1}T_{2r,p}\frac{\partial^{2r}}{\partial x^{2r}},\\
   c_{n,p}^x&=& (-1)^{n}\left(\sum_{|k|>p}\frac{\ee^{\ii 2\pi  k x}}{(\ii 2\pi k)^n}\right)\frac{\partial^{n}}{\partial x^{n}}.
   \label{eq2}
\end{eqnarray}
The operators with superscript $^y$ are defined similarly. 
The factor $s_{i,j}$\ in Eq.~(\ref{eq_V}) is $+1$\ for the vertices $(a,c)$\ and $(b,d)$, while it is $-1$\ for the vertices $(a,d)$\ and $(b,c)$ as illustrated in the right panel of Fig.~\ref{fg_dog_1}.
\end{lemma}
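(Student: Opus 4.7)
My plan is to apply the 1D MEMF (Theorem~\ref{th_mem_1d}) iteratively, once in each variable, and to collect the resulting nine contributions into the four pieces $A$, $L$, $V$, $R$. As a preliminary step, I would recast the 1D formula in an operator form adapted to the half-open intervals appearing in $D$. With $m=0$ and after subtracting $f(b)$ from Eq.~(\ref{eq_mem_1d}), a short computation based on the identity
\begin{equation*}
[B_{n,p}^x f]_a^b=\frac{f(a)-f(b)}{2}+\sum_{r=1}^{\nhi}(-1)^{r}T_{2r,p}M_{2r-1}(a,b)
\end{equation*}
exhibits the 1D MEMF as
\begin{equation*}
\sum_{i=a}^{b-1}f(i)=\int_a^b a_p^x f(x)\,\dd x+\bigl[B_{n,p}^x f\bigr]_a^b+\int_a^b c_{n,p}^x f(x)\,\dd x,
\end{equation*}
in which the three operators $a_p^x$, $B_{n,p}^x$, $c_{n,p}^x$ appear on equal footing and the 1D remainder term has been absorbed into the third summand.

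Next, I would iterate this compressed 1D form: for each fixed integer $j$, apply it in $x$ to $f(\cdot,j)$, producing three summands; then view each summand as a function of the integer variable $j$, sum over $j$, and apply the 1D formula in $y$. Because the operators carrying the superscript $^x$ act only on the first coordinate, they commute with integration and differentiation in $y$, so the iteration yields nine tensor-product contributions, one for each pair drawn from $\{a_p,\,B_{n,p},\,c_{n,p}\}$ in $x$ paired with the corresponding set in $y$, where the $B_{n,p}$-factors appear as endpoint evaluations at $a,b$ (respectively $c,d$) rather than as interior integrals.

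Then I would match each of the nine blocks with a piece of the claimed formula. The pure interior block $(a_p^x,a_{p'}^y)$ gives $A$; the two mixed blocks $(a_p^x,[B_{n',p'}^y]_c^d)$ and $([B_{n,p}^x]_a^b,a_{p'}^y)$ reassemble, with the correct orientations, into the counter-clockwise line integral on $\partial D$ defining $L$; the double boundary block $([B_{n,p}^x]_a^b,[B_{n',p'}^y]_c^d)$ unfolds into four vertex evaluations of $B_{n,p}^xB_{n',p'}^yf$ carrying the alternating signs $s_{i,j}$, giving $V$. The remaining four blocks each involve at least one $c$-factor; the two cross blocks $([B_{n,p}^x]_a^b,c_{n',p'}^y)$ and $(c_{n,p}^x,[B_{n',p'}^y]_c^d)$ are returned to area integrals over $D$ by invoking the fundamental theorem of calculus together with $b_{n,p}^x=\partial_x B_{n,p}^x$, and summing all six resulting area contributions reproduces Eq.~(\ref{eq_mem_2d_R}).

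The hard part is strictly bookkeeping: tracking the orientations of the four edges of $\partial D$, the four signs $s_{i,j}$ at the corners, and the two fundamental-theorem-of-calculus conversions of the cross terms from boundary evaluations to interior integrals. No analytic ingredient beyond the 1D MEMF, the commutativity of $^x$- and $^y$-operators, and the fundamental theorem of calculus should be needed.
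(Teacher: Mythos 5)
Your proposal follows essentially the same route as the paper: the paper likewise rewrites the 1D MEMF with $m=0$ in the operator form $\sum_{i=a}^{b-1}f(i)=\int_a^b\left(a_p^x+b_{n,p}^x\right)f\,\dd x+\int_a^b c_{n,p}^x f\,\dd x$, applies it once in each variable, and sorts the resulting tensor-product terms into $A$, $L$ (via the exact-differential/Green step), $V$, and $R$, exactly as you do with the endpoint-evaluation form $[B_{n,p}^x f]_a^b$ and the fundamental theorem of calculus. The only blemish is a harmless miscount in your prose (five, not four, of the nine blocks contain a $c$-factor, and they yield the five, not six, area terms of Eq.~(\ref{eq_mem_2d_R})); the argument itself is the paper's.
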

\begin{proof}
  We rewrite Eq.~(\ref{eq_MEL1D}) with $m=0$ as 
\begin{eqnarray}
   \sum_{i=a}^{b-1}f(i)
   &=& \int_{a}^{b}\left(a_p^x+b_{n,p}^x\right)f(x)\dd x+\int_{a}^{b}c_{n,p}^xf(x)\dd x.
\label{eq1}
\end{eqnarray} 
Applying the formula in Eq.~(\ref{eq1}) twice, it is easy to obtain
\begin{eqnarray*}
   \sum_{i=a}^{b-1}\sum_{j=c}^{d-1}f(i,j)
   &=& \int_D \left(a_p^x+b_{n,p}^x\right)\left(a_{p'}^y+b_{n',p'}^y\right)f(x,y)\dd x \dd y + R_{np,n'p'}\\
   &=& \int_D a_p^xa_{p'}^yf(x,y)\dd x \dd y + \int_D \left(a_p^xb_{n',p'}^y+b_{n,p}^xa_{p'}^y\right)f(x,y)\dd x\wedge \dd y\\
   & &+ \int_D b_{n,p}^xb_{n',p'}^yf(x,y)\dd x \dd y + R_{np,n'p'}\\
   &=& A + \int_D \dd \left(-a_p^xB_{n',p'}^yf(x,y)\dd x+B_{n,p}^xa_{p'}^yf(x,y)\dd y\right)\\
   & &+ \int_D \frac{\partial}{\partial x}\frac{\partial}{\partial y}B_{n,p}^xB_{n',p'}^yf(x,y)\dd x \dd y + R\\
   &=& A + L  + V + R.
\end{eqnarray*}
That is Eq.~(\ref{eq_sDp}).
\end{proof}
Note that (i) Eq.~(\ref{eq_sDp}) holds for a single point, that is the case with $b=a+1$ and $d=c+1$, and (ii) all terms in Eq.~(\ref{eq_sDp}) are additive with respect to the region $D$. If the region $D$\ is composed of several rectangles instead, Eq.~(\ref{eq_sDp}) still holds, as long as $s_{i,j}$ is properly redefined. This is the content of the next theorem.
\begin{theorem}[MEMF in 2D]\label{co_mem_2d}
Suppose $P$\ is a set of integer pairs with $P_0\subset P$, $P_1=P-P_0$, and 
\begin{eqnarray}
	D&=& \bigcup_{(i,j)\in P_1}\{(x,y)\in \mathbb{R}^2| i\leqslant x<i+1, j\leqslant y<j+1\}.\label{eq_D}
\end{eqnarray}
Under the conditions of Lemma~\ref{le_mem_2d}, we have
\begin{eqnarray}
   \sum_{(i,j)\in P}f(i,j)&=& \sum_{(i,j)\in P_0}f(i,j)+A+L+V+R,\label{eq_mem_2d_co}
\end{eqnarray}
if all the terms are finite and well-defined, where $A$, $L$, $V$, and $R$\ are defined by Eqs.~(\ref{eq_def_A})-(\ref{eq2}), except that $s_{i,j}$\ is $+1$ if we turn from a vertical direction to a horizontal direction at $(i,j)$ along the boundary $\partial D$, and it is $-1$ if we turn from a horizontal direction to a vertical direction at $(i,j)$ along $\partial D$, such that the plus and minus signs appear alternately.
\end{theorem}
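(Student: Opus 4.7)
The approach is to reduce Theorem~\ref{co_mem_2d} to Lemma~\ref{le_mem_2d} applied unit square by unit square, then exploit the additivity of the four terms together with cancellations on shared edges and vertices of the decomposition.

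First I would split the sum as $\sum_{(i,j)\in P} f(i,j) = \sum_{(i,j)\in P_0} f(i,j) + \sum_{(i,j)\in P_1} f(i,j)$; the $P_0$ part appears unchanged on the right of~(\ref{eq_mem_2d_co}), so it suffices to treat the $P_1$ part. Setting $D_{ij} = [i,i+1)\times[j,j+1)$ and applying Lemma~\ref{le_mem_2d} to each $D_{ij}$ (with $a=i$, $b=i+1$, $c=j$, $d=j+1$), the integer-point sum on the left of~(\ref{eq_sDp}) reduces to the single value $f(i,j)$. Summing the resulting identity over $(i,j)\in P_1$ then reduces the theorem to showing that the local pieces $A_{ij}, L_{ij}, V_{ij}, R_{ij}$ reassemble into the global $A, L, V, R$ associated with $D$ and $\partial D$.

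Reassembly is immediate for $A$ and $R$: the integrands in~(\ref{eq_def_A}) and~(\ref{eq_mem_2d_R}) are local, and by~(\ref{eq_D}) the region $D$ is the disjoint union (up to a null set) of the $D_{ij}$. For the line term, every edge shared by two adjacent unit squares inside $D$ is traversed in opposite orientations on the two boundaries $\partial D_{ij}$ (Stokes' theorem having been applied with the standard counterclockwise convention in the proof of Lemma~\ref{le_mem_2d}), so these contributions cancel pairwise and only integrals along $\partial D$ survive, giving $\sum_{ij} L_{ij} = L$.

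The main obstacle is the vertex term: I must check that the sum of the local corner signs (each $\pm1$ from Lemma~\ref{le_mem_2d}) at any lattice point equals the $s_{i,j}$ prescribed in the theorem. I would split the relevant lattice points into three types. At a strictly interior vertex, four unit squares of $D$ meet, contributing the corner signs $+1,-1,-1,+1$ (for lower-left, lower-right, upper-left, upper-right corners respectively), which sum to zero. At a vertex lying on a straight portion of $\partial D$, exactly two squares contribute, and their corner signs are opposite. At a corner of $\partial D$, either one square contributes (a convex $90^\circ$ corner) or three squares contribute (a reflex $270^\circ$ corner); in each of the four rotational configurations a direct check shows the net sign is $+1$ precisely when the counterclockwise traversal of $\partial D$ turns from a vertical to a horizontal direction, and $-1$ when it turns from a horizontal to a vertical direction. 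Alternation of these signs along $\partial D$ follows automatically, since horizontal and vertical segments of a rectilinear closed curve must alternate. The finiteness hypothesis in the statement ensures that all the rearrangements above are legitimate, completing the reduction to Lemma~\ref{le_mem_2d}.
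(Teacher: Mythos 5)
Your proposal is essentially the paper's own argument: the paper proves Theorem~\ref{co_mem_2d} by exactly the remark preceding it --- Eq.~(\ref{eq_sDp}) holds for a single unit square ($b=a+1$, $d=c+1$), and $A$, $L$, $V$, $R$ are additive over the region, with interior edge and vertex contributions cancelling so that only $\partial D$ survives and $s_{i,j}$ is fixed by the turning rule. Your case-by-case check of the corner signs (interior, straight boundary, convex and reflex corners) is exactly the bookkeeping the paper leaves implicit, and it is correct; the only configuration neither you nor the paper addresses is the degenerate ``pinch'' vertex where two squares of $D$ meet only diagonally, which the theorem's ``well-defined'' proviso can be read as excluding.
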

  Again, dropping the remainder term $R$, an approximation of the original summation in Dimension 2 is obtained. 
  \begin{theorem}[Estimation of the remainder term in 2D]
	  If $f$ satisfies the conditions in Theorem~\ref{co_mem_2d}, then the remainder term in Eq.~(\ref{eq_mem_2d_co}) [i.e. Eq.~(\ref{eq_mem_2d_R})] with $n, n'>1$ can be controlled by the following estimation
\begin{eqnarray}
   |R_{np,n'p'}|\leq R_{np,n'p'}^A&\equiv &
	T_{n,p}\left( (2p'+1)\int_D\ \abs{f_{n,0}(x,y)}\dd x \dd y+\frac{1}{2}\int_{\partial D}\abs{f_{n,0}(x,y)\dd x}\right)\nonumber\\
	&&+T_{n',p'}\left( (2p+1)\int_D \ \abs{f_{0,n'}(x,y)}\dd x \dd y+\frac{1}{2}\int_{\partial D}\abs{f_{0,n'}(x,y)\dd y}\right)\nonumber\\
	&&+T_{n,p}\sum_{r=1}^{\hnp}T_{2r,p'}\int_{\partial D} \abs{f_{n,2r-1}(x,y)\dd x}+T_{n',p'}\sum_{r=1}^{\hn}T_{2r,p}\int_{\partial D}\abs{f_{2r-1,n'}(x,y)\dd y}\nonumber\\
	&&+T_{n,p}T_{n',p'}\int_D \ \abs{f_{n,n'}(x,y)}\dd x \dd y
	\label{eq_mem_th4}
\end{eqnarray}
with $f_{\mu,\nu}(x,y)\equiv\frac{\partial^{\mu+\nu}}{\partial x^{\mu}\partial y^{\nu}}f(x,y)$.
\end{theorem}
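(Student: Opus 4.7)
The plan is to apply the triangle inequality to the natural decomposition $R_{np,n'p'}=R_1+R_2+R_3$, where
$R_1=\int_D(a_p^x+b_{n,p}^x)c_{n',p'}^y f\,\dd x\dd y$,
$R_2=\int_D(a_{p'}^y+b_{n',p'}^y)c_{n,p}^x f\,\dd x\dd y$,
and $R_3=\int_D c_{n,p}^x c_{n',p'}^y f\,\dd x\dd y$ are the three summands already appearing in Eq.~(\ref{eq_mem_2d_R}). The bound on $R_3$ will produce the last line of Eq.~(\ref{eq_mem_th4}); the bound on $R_1$ will produce the second line together with the second sum in the third line; and the bound on $R_2$ will produce the first line together with the first sum in the third line, by the symmetry $(x,p,n)\leftrightarrow(y,p',n')$.

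The bound on $R_3$ is immediate. Since $|\ee^{\ii 2\pi kx}|=1$ and $\sum_{|k|>p}|2\pi k|^{-n}=T_{n,p}$, the pointwise estimate $|c_{n,p}^x g(x)|\le T_{n,p}|g^{(n)}(x)|$ already used in Theorem~\ref{th_es_1d} applies to both operators, and since $c_{n,p}^x$ and $c_{n',p'}^y$ act on distinct variables and therefore commute, one obtains $|c_{n,p}^x c_{n',p'}^y f|\le T_{n,p}T_{n',p'}|f_{n,n'}|$, so integrating over $D$ delivers the last line of Eq.~(\ref{eq_mem_th4}).

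The main work is the bound on $R_1$ (the argument for $R_2$ is identical with the two variables swapped). I split the integrand into an $a_p^x$-piece and a $b_{n,p}^x$-piece. For the first piece I use the Dirichlet-kernel identity $a_p^x=\sum_{k=-p}^p \ee^{\ii 2\pi kx}$, which is the basis for Eq.~(\ref{eq_grating}), to obtain the uniform bound $|a_p^x|\le 2p+1$; combining this with the pointwise estimate on $c_{n',p'}^y f$ produces the $(2p+1)T_{n',p'}\int_D|f_{0,n'}|\,\dd x\dd y$ summand in the second line. For the second piece I observe that $b_{n,p}^x=\partial_x B_{n,p}^x$ and apply Stokes' theorem in the $x$-direction on $D$ to convert the area integral into a boundary integral $\oint_{\partial D}B_{n,p}^x c_{n',p'}^y f\,\dd y$ along the vertical segments of $\partial D$. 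Substituting the explicit expansion $B_{n,p}^x=-\tfrac12+\sum_{r=1}^{\hn}(-1)^{r+1}T_{2r,p}\partial_x^{2r-1}$ and applying the pointwise bound on $c_{n',p'}^y$ term by term yields both the $\tfrac12 T_{n',p'}\int_{\partial D}|f_{0,n'}\dd y|$ summand of the second line and the full sum $T_{n',p'}\sum_{r=1}^{\hn} T_{2r,p}\int_{\partial D}|f_{2r-1,n'}\dd y|$ in the third line.

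The only delicate point is the Stokes-theorem step on the irregular domain $D$ from Eq.~(\ref{eq_D}), which is in general only a finite union of unit squares rather than a single rectangle; after taking absolute values, orientation becomes irrelevant and the contributions from any interior edge shared by two adjacent unit squares cancel automatically, so that only the true boundary $\partial D$ survives. The hypothesis $n,n'>1$ is used only to ensure that $T_{n,p}$ and $T_{n',p'}$ are finite so that every summand of the right-hand side of Eq.~(\ref{eq_mem_th4}) is well defined.
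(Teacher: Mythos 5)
Your proposal is correct and follows essentially the same route as the paper: the same three-term decomposition of $R_{np,n'p'}$, the pointwise bounds $\abs{c_{n,p}^x f}\le T_{n,p}\abs{f_{n,0}}$ and $\abs{a_p^x}\le 2p+1$, Green/Stokes to turn the $b$-terms into boundary integrals, and the term-by-term expansion of $B_{n,p}^x$ (the paper merely applies Stokes once to the combined exact form $-B_{n',p'}^y c_{n,p}^x f\,\dd x+B_{n,p}^x c_{n',p'}^y f\,\dd y$ rather than to each $b$-piece separately, which is the same by linearity). One small slip in wording: the interior-edge contributions on the union of unit squares cancel because of opposite orientations \emph{before} you take absolute values, not after; the step itself is fine once stated in that order.
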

\begin{proof}
   Eq.~(\ref{eq_mem_2d_R}) can be estimated by
   \begin{eqnarray}
	\abs{R_{np,n'p'}}
	&=& \abs{\int_D \ \left(\left(a_{p'}^y+b_{n',p'}^y\right)c_{n,p}^x+\left(a_p^x+b_{n,p}^x\right)c_{n',p'}^y+c_{n,p}^xc_{n',p'}^y\right) f(x,y)\dd x \dd y}\nonumber\\
	&\leq& \int_D  \left(\abs{a_{p'}^yc_{n,p}^xf(x,y)}+\abs{a_p^xc_{n',p'}^yf(x,y)}+\abs{c_{n,p}^xc_{n',p'}^yf(x,y)}\right)\dd x \dd y\ \nonumber\\
	&&+ \abs{\int_D \ \left(b_{n',p'}^yc_{n,p}^xf(x,y)+b_{n,p}^xc_{n',p'}^yf(x,y)\right)\dd x \dd y}.
   \end{eqnarray}
   Note that, following the procedure in the proof of Theorem~\ref{th_es_1d}, we have
   \begin{eqnarray}
	\abs{c_{n,p}^xf(x,y)}&\le&T_{n,p}\abs{f_{n,0}(x,y)},\label{eq_es_cx}\\
	\abs{c_{n',p'}^yf(x,y)}&\le&T_{n',p'}\abs{f_{0,n'}(x,y)}\label{eq_es_cy}.
   \end{eqnarray}
   Thus,
   \begin{eqnarray*}
	&& \int_D\ \left(\abs{a_{p'}^yc_{n,p}^xf(x,y)}+\abs{a_p^xc_{n',p'}^yf(x,y)}+\abs{c_{n,p}^xc_{n',p'}^yf(x,y)}\right) \dd x \dd y \nonumber\\
	&\leq& T_{n,p}\int_D\ \abs{a_{p'}^yf_{n,0}(x,y)}\dd x \dd y+T_{n',p'}\int_D \ \abs{a_p^xf_{0,n'}(x,y)}\dd x \dd y+T_{n,p}T_{n',p'}\int_D \ \abs{f_{n,n'}(x,y)}\dd x \dd y.
   \end{eqnarray*}
   \begin{eqnarray*}
	&& \abs{\int_D \ \left(b_{n',p'}^yc_{n,p}^xf(x,y)+b_{n,p}^xc_{n',p'}^yf(x,y)\right)\dd x \dd y}\\
	 &=&\abs{\int_D \dd \left(-B_{n',p'}^yc_{n,p}^xf(x,y)\dd x+B_{n,p}^xc_{n',p'}^yf(x,y)\dd y\right)} \\
	&=&\abs{\int_{\partial D} \left(-B_{n',p'}^yc_{n,p}^xf(x,y)\dd x+B_{n,p}^xc_{n',p'}^yf(x,y)\dd y\right)} \\
	&\leq&T_{n,p}\int_{\partial D}\abs{B_{n',p'}^yf_{n,0}(x,y)\dd x}+T_{n',p'}\int_{\partial D}\abs{B_{n,p}^xf_{0,n'}(x,y)\dd y}.
   \end{eqnarray*}
   Note that
   \begin{eqnarray*}
	\abs{a_p^x}&\leq&2p+1,\\
	\abs{B_{n,p}^x f(x,y)}&=& \abs{\left(-\frac{1}{2}+\sum_{r=1}^{\nhi}(-1)^{r+1}T_{2r,p}\frac{\partial^{2r-1}}{\partial x^{2r-1}}\right)f(x,y)}
	\leq\frac{1}{2}\abs{f(x,y)}+\sum_{r=1}^{\hn}T_{2r,p}\abs{f_{2r-1,0}(x,y)}.
   \end{eqnarray*}
   Similarly,
   \begin{eqnarray*}
	\abs{a_{p'}^y}&\leq&2p'+1,\\
	\abs{B_{n',p'}^y f(x,y)}&=& \abs{\left(-\frac{1}{2}+\sum_{r=1}^{\hnp}(-1)^{r+1}T_{2r,p'}\frac{\partial^{2r-1}}{\partial y^{2r-1}}\right)f(x,y)} 
	\leq\frac{1}{2}\abs{f(x,y)}+\sum_{r=1}^{\hnp}T_{2r,p'}\abs{f_{0,2r-1}(x,y)}.
   \end{eqnarray*}
   Thus Eq.~(\ref{eq_mem_th4}) holds.
\end{proof}
Obviously, for given $n$, $n'$ and $f$, one can always choose proper $p$ and $p'$ to achieve as high accuracy as expected.
For $n'=n$ and $p'=p$, the formula is simpler as follows.
\begin{corollary}[Estimation of the remainder term with $n=n'$ and $p=p'$]\label{co_rem2dnp}
   If $f$ satisfies the conditions in Theorem~\ref{co_mem_2d}, then
\begin{eqnarray*}
   |R_{np,np}|&\leq&R_{np,np}^{A}\\
   &=& T_{n,p}\left( (2p+1)\int_D\ \left(\abs{f_{n,0}(x,y)}+\abs{f_{0,n}(x,y)}\right)\dd x \dd y+\frac{1}{2}\int_{\partial D}\abs{f_{0,n}(x,y)\dd y}+\abs{f_{n,0}(x,y)\dd x}\right)\nonumber\\
	&&+T_{n,p}\sum_{r=1}^{\hn}T_{2r,p}\left(\int_{\partial D} \abs{f_{n,2r-1}(x,y)\dd x}+\abs{f_{2r-1,n}(x,y)\dd y}\right)+T_{n,p}^2\int_D \ \abs{f_{n,n}(x,y)}\dd x \dd y.
\end{eqnarray*}
\end{corollary}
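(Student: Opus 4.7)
The plan is to derive this corollary as a direct specialization of the preceding 2D estimation theorem, Eq.~(\ref{eq_mem_th4}), by setting $n'=n$ and $p'=p$. Because the general inequality has already done all the operator estimation work with independent parameters on the two coordinates, no new analytic argument is required; the task reduces to substituting, exploiting the resulting symmetry between the $x$ and $y$ contributions, and collecting like terms.

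First, I substitute $n'=n$ and $p'=p$ into Eq.~(\ref{eq_mem_th4}). The first two lines of that bound become symmetric under the exchange $(x,n,p)\leftrightarrow(y,n',p')$, so each now carries the common prefactor $T_{n,p}$, a coefficient $(2p+1)$ in front of an interior integral, and a factor $\frac{1}{2}$ in front of a boundary integral. Adding them yields $T_{n,p}\bigl((2p+1)\int_D(\abs{f_{n,0}}+\abs{f_{0,n}})\dd x\dd y+\frac{1}{2}\int_{\partial D}(\abs{f_{0,n}\dd y}+\abs{f_{n,0}\dd x})\bigr)$. Second, in the third line both summation upper limits $\hn$ and $\hnp$ collapse to $[n/2]$, and the two coefficient products $T_{n,p}T_{2r,p'}$ and $T_{n',p'}T_{2r,p}$ both reduce to $T_{n,p}T_{2r,p}$, so I factor $T_{n,p}\sum_{r=1}^{\hn}T_{2r,p}$ out front and combine the two boundary integrands into $\int_{\partial D}(\abs{f_{n,2r-1}\dd x}+\abs{f_{2r-1,n}\dd y})$. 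Third, the last line collapses immediately from $T_{n,p}T_{n',p'}\int_D\abs{f_{n,n'}}\dd x\dd y$ to $T_{n,p}^2\int_D\abs{f_{n,n}}\dd x\dd y$.

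No genuine obstacle arises — the whole argument is notational. The only point requiring care is the bookkeeping on $\partial D$: each mixed partial must remain paired with the correct boundary differential ($f_{n,\cdot}$ under $\dd x$ and $f_{\cdot,n}$ under $\dd y$), a pairing dictated by the one-form structure already displayed in Eqs.~(\ref{eq_def_A})--(\ref{eq_mem_2d_R}) and inherited unchanged from the proof of the previous theorem. Once this correspondence is respected, the regrouping is automatic and the claimed form of $R_{np,np}^{A}$ follows.
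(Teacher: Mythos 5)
Your proposal is correct and matches the paper's own treatment: the corollary is obtained exactly by setting $n'=n$, $p'=p$ in Eq.~(\ref{eq_mem_th4}) and regrouping the now-symmetric interior and boundary terms, which is all the paper does (it states the specialization without a separate proof). Your bookkeeping of the pairings $f_{n,\cdot}$ with $\dd x$ and $f_{\cdot,n}$ with $\dd y$ is the only nontrivial point, and you handle it correctly.
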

\section{Applications}  \label{app} 
\subsection{Partition function of a one-dimensional infinite square well}
For a one-dimensional infinite square well, the eigen energy is 
\begin{eqnarray*} 
	\epsilon_l&=& \epsilon l^2,\qquad l\in \mathbb{N}^+,
\end{eqnarray*}
where $\epsilon=\epsilon_1$\ is the eigen energy of the ground state. Thus the partition function
\begin{eqnarray*}
   Z&=& \sum_{l=1}^{+\infty}f(l),
\end{eqnarray*}
   with $f(x)=\ee^{-Bx^2}$ and $B=\beta\epsilon_1$. 
   Then Eq.~(\ref{eq_MEL1D_inf}) gives
\begin{eqnarray*}
   Z &=& Z_m+W_{np}(m+1)+R_{mnp},
 \end{eqnarray*}
 with
 \begin{eqnarray}
	 Z_m&=& \sum_{l=1}^me^{-Bl^2},\label{eq_62}\\
	 W_{np}(x)&=& T_p(x)+U_{np}(x),\label{eq_63}\\
	 T_p(x)&=&\int_x^{+\infty}\frac{\sin((2p+1)\pi \xi)}{\sin(\pi \xi)}f(\xi)\dd \xi= \frac{1}{2}\sqrt{\frac{\pi}{B}}\sum_{k=-p}^pe^{-\frac{\pi^2k^2}{B}}\erfc\left(\sqrt{B}x-\frac{i\pi k}{\sqrt{B}}\right),\nonumber\\
	 U_{np}(x)&=& \frac{1}{2}f(x)+\sum_{r=1}^{\nhi}(-1)^{r}T_{2r,p}f^{(2r-1)}(x)  =\frac{1}{2}\ee^{-Bx^2}+\sum_{r=1}^{\nhi}(-1)^{r+1}T_{2r,p}B^{r-\frac{1}{2}}H_{2r-1}(\sqrt{B}x)\ee^{-Bx^2},\nonumber
 \end{eqnarray}
 where $\erfc(x)=\frac{2}{\sqrt{\pi}}\int_x^{+\infty}\ee^{-z^2}\dd z$\ is the complementary error function, and $H_n(x)=(-1)^ne^{x^2}\frac{\dd^n}{\dd x^n}\ee^{-x^2}$\ is the Hermite polynomial.
 It can be proved that
 \begin{eqnarray*}
   \abs{H_n(x)\ee^{-x^2}}&\leq&\frac{2^n\Gamma(\frac{n+1}{2})}{\sqrt{\pi}}
\end{eqnarray*}
for all $x \in \mathbb{R}$ and all $n\in\mathbb{N}$, and $H_n(x)\ee^{-x^2}$\ has at most $\left[\frac{n+3}{2}\right]$ monotonic intervals on $[m+1,+\infty)$,
the remainder term can be controlled according to Corollary~\ref{co_TLM} by
\begin{eqnarray}
	\abs{R_{mnp}}\leq \tilde{R}_{mnp}^A\leq \bar{R}_{np}^A\equiv T_{n,p}\frac{\left(\nhi+1\right)2^{n}\Gamma(\frac{n}{2})}{\sqrt{\pi}}B^{\frac{n-1}{2}}\label{eq_67}
\end{eqnarray}
for $n>1$.

To have a better estimation of the remainder term, especially when $m$ is large, we need a better estimation of $H_n(x)$. Note that $H_n(x)\ee^{-x^2/2}$\ is the (unnormalized) wave function of a quantum oscillator. By considering the classical corresponding of a quantum oscillator~\cite{Pauling}, we have the following conjecture
\begin{conjecture}\label{conj}
	An upper bound of Hermite polynomials can be estimated by
\begin{eqnarray}
   \abs{H_n(x)\ee^{-x^2}}&\leq&\frac{2^n\Gamma(\frac{n+1}{2})}{\sqrt{\pi}}g_n(\textrm{min}(x,x_n))\ee^{-x^2/2},\label{hyp}
\end{eqnarray}
with
\begin{eqnarray*}
	g_n(x)&=& \left({1-\frac{x^2}{2n+1}}\right)^{-1/2},
\end{eqnarray*}
and $x_n=\sqrt{2n+1}\left(1-\frac{\pi}{2(2n+1)}\right)$ for all $n\in \mathbb{N}^+$, and $x>0$. 
\end{conjecture}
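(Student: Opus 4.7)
The plan is to reduce the conjecture to an estimate on the Hermite function $y(x):=H_n(x)\ee^{-x^2/2}$, which satisfies the quantum-oscillator ODE $y''+(2n+1-x^2)y=0$ with classical turning point $x_*:=\sqrt{2n+1}$. Dividing the stated inequality by $\ee^{-x^2/2}$, what must be shown is $\abs{y(x)}\le C\,g_n(\min(x,x_n))$ with $C=\frac{2^n\Gamma((n+1)/2)}{\sqrt{\pi}}$, and it is natural to treat separately the oscillatory bulk $[0,x_n]$, the thin strip $[x_n,x_*]$ of width $\pi/(2\sqrt{2n+1})$ around the turning point, and the classically forbidden tail $[x_*,\infty)$.

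On $[0,x_n]$ I would introduce the energy-like quantity $E(x):=y'(x)^2+(2n+1-x^2)y(x)^2$. The ODE yields $E'(x)=-2x\,y(x)^2\le 0$ for $x\ge 0$, so $E$ is non-increasing; combined with $(2n+1-x^2)y(x)^2\le E(x)\le E(0)$ this gives the envelope $y(x)^2\le[E(0)/(2n+1)]\,g_n(x)^2$, reducing the problem to an estimate on $E(0)$. For even $n$, the Legendre duplication formula applied to $\Gamma((n+1)/2)$ produces the exact identity $\abs{H_n(0)}=n!/(n/2)!=C$, and together with $H_n'(0)=0$ this yields $E(0)=(2n+1)\,C^2$; the bound on $[0,x_n]$ then follows immediately by monotonicity, with equality at $x=0$.

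For odd $n$, however, $y(0)=0$ and $E(0)=(2n\,H_{n-1}(0))^2$; a direct computation (sharpenable by Stirling) shows that $E(0)/[(2n+1)C^2]$ strictly exceeds $1$ (it equals $\pi/3$ at $n=1$ and tends to $1$ from above as $n\to\infty$), so the soft energy inequality is by itself not tight enough. I would tighten it by not using $E(0)$ as a uniform ceiling: since $E(x)=E(0)-2\int_0^x s\,y(s)^2\,\dd s$, feeding the conjectured envelope back into this integral (a Gronwall-type closure, with the extrema of $y$ serving as benchmark points where $y(x_*^{(k)})^2=E(x_*^{(k)})/(2n+1-(x_*^{(k)})^2)$ becomes an equality) should absorb the residual slack. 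On the narrow strip $[x_n,x_*]$, whose width is precisely the natural scale of the uniform Liouville--Green/Airy approximation, I would compare $y$ against a rescaled Airy function to obtain the constant cap $\abs{y(x)}\le C\,g_n(x_n)$; on $[x_*,\infty)$ the function $y$ enters the classically forbidden region and is eventually monotone, so the cap propagates trivially outward.

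The hardest step will be closing the odd-$n$ case and matching at $x=x_n$: the explicit $\pi$ in the definition of $x_n$ strongly suggests that the threshold has been tuned so that the first post-turning-point Airy lobe carries amplitude exactly $C\,g_n(x_n)$, so a rigorous proof must reproduce this constant uniformly in $n$ via an asymptotic expansion with a quantitative remainder rather than via the soft energy argument alone. Closing the residual gap between $E(0)$ and $(2n+1)C^2$ uniformly in all odd $n$, and matching it cleanly to the Airy estimate across $x_n$, is where I would expect the bulk of the technical work to lie.
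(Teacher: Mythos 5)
First, a point of context: the paper does not prove this statement at all --- it is deliberately labelled a \emph{conjecture}, motivated by the classical limit of the quantum oscillator (the envelope $g_n$ is exactly the classical probability density $\propto (2n+1-x^2)^{-1/2}$) and supported only by numerical verification for $n\le 20$. So there is no paper proof to compare against, and the question is whether your sketch actually closes the conjecture. It does not, although the first half of it is sound: for $y=H_n(x)\ee^{-x^2/2}$ one indeed has $y''+(2n+1-x^2)y=0$, $E'=-2xy^2\le 0$ for $x\ge 0$, and for even $n$ the duplication formula gives $|H_n(0)|=n!/(n/2)!=C$ and $y'(0)=0$, so $E(0)=(2n+1)C^2$ and the envelope bound on $[0,\sqrt{2n+1})$ follows rigorously. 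That part would already be a genuine contribution beyond what the paper offers.

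The gap is everything you yourself flag as ``the hardest step,'' which is precisely where the conjecture lives. (i) For odd $n$ the energy ceiling $E(0)=(2nH_{n-1}(0))^2$ exceeds $(2n+1)C^2$ (by the factor $\pi/3$ at $n=1$), and the proposed ``Gronwall-type closure'' is only a hope: what must actually be shown is that at the first critical point $x_1^*$ of $y$ one has $2\int_0^{x_1^*}s\,y(s)^2\,\dd s\ge E(0)-(2n+1)C^2$, uniformly in odd $n$, and no argument for this is given --- feeding the conjectured envelope back into the integral is circular unless organized as a genuine continuity/bootstrap argument with a verified strict inequality at the endpoint. (ii) On the strip $[x_n,\sqrt{2n+1}]$ and for the cap $C\,g_n(x_n)$, invoking a Liouville--Green/Airy comparison is the right heuristic, but the conjecture requires the Airy amplitude to be bounded by $C\,g_n(x_n)$ with the \emph{explicit} constant built into $x_n=\sqrt{2n+1}\bigl(1-\tfrac{\pi}{2(2n+1)}\bigr)$, uniformly in $n$; this needs uniform turning-point asymptotics with quantitative remainders (Olver-type bounds), and since the numerics show the inequality is nearly saturated for small $n$, a crude $O(n^{-1})$ error term may not suffice for $n=1,2,3,\dots$ without separate finite checking. (iii) A minor repair: on $[\sqrt{2n+1},\infty)$ the quantity $E$ is no longer a sum of nonnegative terms, so ``the cap propagates trivially outward'' should instead be argued from convexity ($y''=(x^2-2n-1)y$ forces $|y|$ to decrease monotonically to $0$ beyond the turning point), which again pushes all the weight onto the unproved turning-point estimate. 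As it stands, your proposal is a plausible program, stronger than the paper's numerical evidence in the even-$n$ bulk, but it is not a proof of the conjecture.
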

We have checked numerically that the conjecture holds at least for $n\leq 20$, and one can easily check given reasonable larger $n$ numerically when necessary.  As an illustration, the first $9$ values of $x_n$ and $g_n(x_n)$\ are listed in Table~\ref{tb_xngn}. If Eq.~(\ref{hyp}) holds, there is a rougher but simpler estimation:
\begin{eqnarray}
   \abs{H_n(x)\ee^{-x^2}}&\leq&G_ne^{-x^2/2},\label{hyp_2}
\end{eqnarray}
with
\begin{eqnarray}
	G_n&\equiv&\frac{2^n\Gamma(\frac{n+1}{2})}{\sqrt{\pi}}g_n(x_n)=\frac{2^{n+1}(2n+1)\Gamma(\frac{n+1}{2})}{\pi\sqrt{4(2n+1)-\pi}},
\end{eqnarray}
for all $n\in\mathbb{N}^+$, since $g_n(x)$ increases as $x$ increases at $x>0$.
Then the remainder can be controlled according to Corollary~\ref{co_TLM} by
\begin{eqnarray}
	\abs{R_{mnp}}&\leq&\tilde{R}_{mnp}^H\equiv T_{n,p}2(\nhi+1)G_{n-1}\ee^{-\frac{B(m+1)^2}{2}}B^{\frac{n-1}{2}} \label{eq_RH}
\end{eqnarray}
for $n>1$. The superscript $^H$ here and in the following implies that we have applied Eq.~(\ref{hyp_2}) which is based on Conjecture \ref{conj} on Hermite polynomials. From Eq.~(\ref{eq_RH}), it can be seen that when $n$\ is properly large, the remainder will be controlled for small $B$ due to the power term, and when $m$\ is properly large, the remainder will be controlled for large $B$ due to the exponential term, and when $p$\ is properly large, the remainder will be suppressed overall due to the effect of the $T_{n,p}$ term. Therefore, one can easily make the remainder term as small as possible. We list some values of $\bar{R}_{np}^A$ [defined in Eq.~(\ref{eq_67})], $\tilde{R}_{0np}^H$ [defined in Eq.~(\ref{eq_RH})] in Table~\ref{tb_bR}. It can be seen that the result can be as accurate as possible by choosing proper $m$, $n$, and $p$. If we regard $\tilde{R}^H_{mnp}$\ as  a function of $B$, the maximum is achieved at $B_m=(n-1)/(m+1)^2$, so that the remainder term can be controlled for all values of $B$.  We plot $\tilde{R}^H_{252}$  as a function of $B$ as an example in Fig.~\ref{fg_R}, where $B_m=4/9$. 
Eq.~(\ref{hyp}) can further be improved for large $x$, and we leave it for future work. 
\begin{table}[!hbt]
	\centering
   \begin{tabular}{cccccccccc}
	\hline
	$n $&$ 2 $&$ 3 $&$ 4 $&$ 5 $&$ 6 $&$ 7 $&$ 8$&$9$&$10$\\
	\hline
	$x_n$&$1.534$&$2.052$&$2.476$&$2.843$&$3.170$&$3.467$&$3.742$&$3.999$&$4.240$\\
	\hline
	$g_n(x_n)$&$1.374$&$1.584$&$1.772$&$1.942$&$2.099$&$2.245$&$2.382$&$2.512$&$2.635$\\
	\hline
   \end{tabular}
   \caption{Approximate values of the first several $x_n$ and $g_n(x_n)$.}\label{tb_xngn}
\end{table}
\begin{table}[!hbt]
	\centering
   \begin{tabular}{ccccccccccccccc}
 	\hline
	\diagbox{$p$}{$n$} &$ 3 $&$ 5 $&$ 7 $&$ 9 $&$ 11 $&$ 13 $&$ 15 $\\
	\hline
	$0$ & $7.8\times 10^{-2 }$&$ 1.5\times 10^{-2 }$&$ 5.0\times 10^{-3 }$&$ 2.2\times 10^{-3 }$&$ 1.2\times 10^{-3 }$&$ 7.8\times 10^{-4 }$&$ 5.9\times 10^{-4 }$\\
	$1$ & $1.3\times 10^{-2 }$&$ 5.4\times 10^{-4 }$&$ 4.1\times 10^{-5 }$&$ 4.4\times 10^{-6 }$&$ 6.0\times 10^{-7 }$&$ 9.6\times 10^{-8 }$&$ 1.8\times 10^{-8 }$\\
	$2$ & $5.0\times 10^{-3 }$&$ 8.3\times 10^{-5 }$&$ 2.7\times 10^{-6 }$&$ 1.2\times 10^{-7 }$&$ 7.1\times 10^{-9 }$&$ 5.0\times 10^{-10 }$&$ 4.2\times 10^{-11 }$\\
	$3$ & $2.6\times 10^{-3 }$&$ 2.3\times 10^{-5 }$&$ 3.9\times 10^{-7 }$&$ 9.8\times 10^{-9 }$&$ 3.2\times 10^{-10 }$&$ 1.2\times 10^{-11 }$&$ 5.7\times 10^{-13 }$\\
	$4$ & $1.6\times 10^{-3 }$&$ 8.6\times 10^{-6 }$&$ 9.2\times 10^{-8 }$&$ 1.4\times 10^{-9 }$&$ 2.9\times 10^{-11 }$&$ 7.1\times 10^{-13 }$&$ 2.1\times 10^{-14 }$\\
 	\hline
\end{tabular}
\caption{Approximate values of $\bar{R}_{np}^A$ [defined in Eq.~(\ref{eq_67})] with $B=1$.}\label{tb_bR}
\end{table}
\begin{table}[!hbt]
	\centering
   \begin{tabular}{ccccccccccccc}
 	\hline
	\diagbox{$p$}{$n$}&$ 3 $&$ 5 $&$ 7 $&$ 9 $&$ 11 $&$ 13 $&$ 15 $\\
	\hline
 $0 $&$ 6.5\times 10^{-2} $&$ 1.6\times 10^{-2} $&$ 6.4\times 10^{-3} $&$ 3.2\times 10^{-3} $&$ 1.9\times 10^{-3} $&$ 1.4\times 10^{-3} $&$ 1.1\times 10^{-3} $\\
 $1 $&$ 1.1\times 10^{-2} $&$ 5.8\times 10^{-4} $&$ 5.3\times 10^{-5} $&$ 6.4\times 10^{-6} $&$ 9.5\times 10^{-7} $&$ 1.7\times 10^{-7} $&$ 3.4\times 10^{-8} $\\
 $2 $&$ 4.1\times 10^{-3} $&$ 9.\times 10^{-5} $&$ 3.4\times 10^{-6} $&$ 1.8\times 10^{-7} $&$ 1.1\times 10^{-8} $&$ 8.8\times 10^{-10} $&$ 7.8\times 10^{-11} $\\
 $3 $&$ 2.2\times 10^{-3} $&$ 2.5\times 10^{-5} $&$ 5.0\times 10^{-7} $&$ 1.4\times 10^{-8} $&$ 5.1\times 10^{-10} $&$ 2.2\times 10^{-11} $&$ 1.1\times 10^{-12} $\\
 $4 $&$ 1.3\times 10^{-3} $&$ 9.3\times 10^{-6} $&$ 1.2\times 10^{-7} $&$ 2.1\times 10^{-9} $&$ 4.6\times 10^{-11} $&$ 1.2\times 10^{-12} $&$ 3.9\times 10^{-14} $\\
 	\hline
\end{tabular}
\caption{Approximate values of $\tilde{R}_{0np}^H$ [defined in Eq.~(\ref{eq_RH})] with $B=1$.}
\end{table}
\begin{figure}[!hbt]
    \centering
    \includegraphics[width=0.5\textwidth]{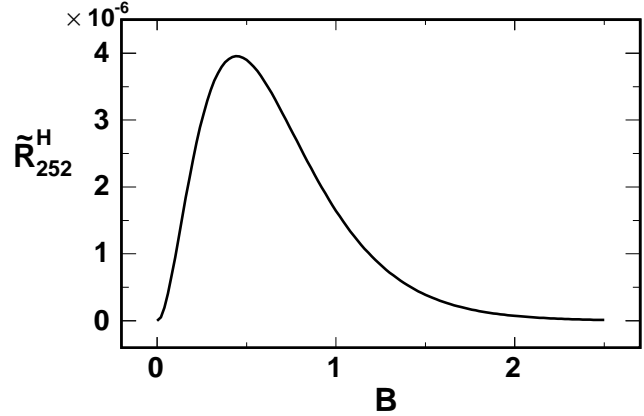}
    \caption{The error bound $\tilde{R}^H_{252}$ [defined in Eq.~(\ref{eq_RH})]\ as a function of $B$.}
    \label{fg_R}
 \end{figure}
 
\subsection{Partition function of a quantum rotator}
For a quantum rotator, the eigen energy is 
\begin{eqnarray*}
	\epsilon_l&=& \frac{l(l+1)}{2}\epsilon_1, \qquad l\in \mathbb{N}
\end{eqnarray*}
with its degeneracy $\omega_l= 2l+1$, and $\epsilon_1$ is the energy level of the first excited states.  The partition function is given by
\begin{eqnarray*}
   Z&=& \sum_{l=0}^{+\infty}\omega_l \ee^{-\beta\epsilon_l}=\sum_{l=0}^{+\infty}(2l+1)\ee^{-\mathcal{B}l(l+1)}=-\frac{\ee^{\mathcal{B}/4}}{\mathcal{B}}\sum_{l=0}^{+\infty}f'(l+{\scriptstyle\half{1}}),
\end{eqnarray*}
with $\mathcal{B}=\beta \epsilon_1/2$ and $f(x)=\ee^{-\mathcal{B} x^2}$. Compute
\begin{eqnarray*}
   Z&=& Z_m'+\mathcal{W}_{np}(m+{\scriptstyle\half{1}})+R_{mnp},\\
   Z_m'&=& \sum_{l=0}^{m-1}(2l+1)\ee^{-\mathcal{B}l(l+1)},\\
   \mathcal{W}_{np}(x)&=& \mathcal{T}_p(x)+\mathcal{U}_{np}(x),\\
   \mathcal{T}_p(x)&=& \frac{2p+1}{\mathcal{B}}\ee^{-\mathcal{B}(x^2-\frac{1}{4})}+2\left(\frac{\pi}{\mathcal{B}}\right)^{3/2}\ee^{\frac{\mathcal{B}}{4}}\sum_{k=1}^p(-1)^{k} k \ee^{-\frac{\pi^2k^2}{\mathcal{B}}} \textrm{Im}\left( \erf\left(\sqrt{\mathcal{B}}x-i\frac{k\pi}{\sqrt{\mathcal{B}}}\right)\right),\\
   \mathcal{U}_{np}(x)&=& \left(x+\sum_{r=1}^{\nhi}(-1)^{r+1}T_{2r,p}\mathcal{B}^{r-1}H_{2r}(\sqrt{\mathcal{B}}x)\right)\ee^{-\mathcal{B}(x^2-\frac{1}{4})},
\end{eqnarray*} 
where $\erf(x)=\frac{2}{\sqrt{\pi}}\int_0^x \ee^{-z^2}\dd z$\ is the error function. 
If Eq.~(\ref{hyp}) holds, then the remainder can be controlled through Eq.~(\ref{hyp_2}) according to Corollary~\ref{co_TLM} as
\begin{eqnarray}
	\abs{R_{mnp}}&\leq&\tilde{R}_{mnp}^H=T_{n,p}\left[\frac{n+3}{2}\right]2G_n \ee^{-\mathcal{B}\left(\frac{m(m+1)}{2}-\frac{1}{8}\right)}\mathcal{B}^{\frac{n}{2}-1} \label{eq_RH_ro}
\end{eqnarray}
for $n>1$. Again, one can achieve given accuracy by choosing proper $m$, $n$, and $p$. 
 \subsection{Partition function of a two-dimensional infinite square well}
 \begin{figure}[!hbt]
     \centering
     \includegraphics[width=0.3\textwidth]{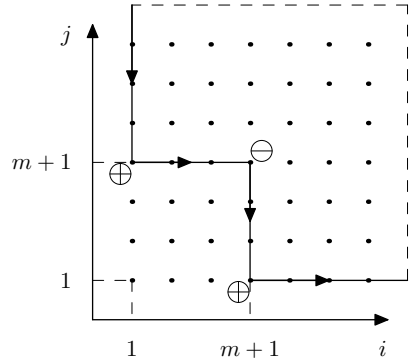}
     \caption{Summation region of the partition function of a two-dimentional square well.}
     \label{fg_sw}
\end{figure}
 We assume that a particle is in a two-dimensional infinite square well
 .  Then the eigen energies are given by
 \begin{eqnarray*}
	 \epsilon_{i,j}&=&\epsilon_{1,0}(i^2+j^2), \qquad i,j=1,2,3\ldots
 \end{eqnarray*} 
The partition function is
\begin{eqnarray*}
  Z&=&\sum_{i=1}^{+\infty}\sum_{j=1}^{+\infty}\ee^{-\beta \epsilon_{i,j}}=\sum_{i=1}^{+\infty}\sum_{j=1}^{+\infty}\ee^{-B(i^2+j^2)},
\end{eqnarray*}
with
$B=\beta\epsilon_{1,0}$. Applying Theorem~\ref{co_mem_2d} to the above summation with $P={\mathbb{N}^+}^2$, $P_0=\{(i,j)\in P|i \le m, j\le m\}$, $P_1=P-P_0$, $n'=n$, and $p'=p$, we obtain
\begin{eqnarray}
  Z=\sum_{(i,j)\in P_0}\ee^{-B(i^2+j^2)}+\sum_{(i,j) \in\ P_1}\ee^{-B(i^2+j^2)}= Z_m^2+W_{np}(m+1)\left(2W_{np}(1)-W_{np}(m+1)\right)+R_{np,np},
  \label{Z1}
\end{eqnarray}
where $Z_m$\ and $W_{np}(x)$ are defined in Eqs.~(\ref{eq_62}) and (\ref{eq_63}).

Now we try to give an upper bound of the remainder with the help of Conjecture \ref{conj}. With the help of Eq.~(\ref{hyp_2}),
we have
\begin{eqnarray}
	\abs{f_{n_1,n_2}(x,y)}&\leq& B^{\frac{n_1+n_2}{2}}G_{n_1}G_{n_2}\ee^{-B\frac{x^2+y^2}{2}}, \label{eq_82}\\
	\abs{f_{n_1,0}(x,y)}&\leq& B^{\frac{n_1}{2}}G_{n_1}\ee^{-B(\frac{x^2}{2}+y^2)}, \label{eq_83}
\end{eqnarray}
for $f(x,y)=\ee^{-B(x^2+y^2)}$ and $n_1,n_2 \in \mathbb{N}^+$.
Note that $f(x,y)=f(y,x)$. According to Corollary~\ref{co_rem2dnp}, the remainder can be controlled by
\begin{eqnarray*}
	\abs{R_{np,np} }
   \leq R_{np,np}^{A}
   &=& T_{n,p}\left( 2(2p+1)\int_D\ \abs{f_{n,0}(x,y)}\dd x \dd y+\int_{\partial D}\abs{f_{n,0}(x,y)\dd x}\right)\nonumber\\
	&&+2T_{n,p}\left(\sum_{r=1}^{\hn}T_{2r,p}\int_{\partial D} \abs{f_{n,2r-1}(x,y)\dd x}\right)+T_{n,p}^2\int_D \ \abs{f_{n,n}(x,y)}\dd x \dd y.
\end{eqnarray*}
The four integrals above can be estimated by Eqs.~(\ref{eq_82}), (\ref{eq_83}) as
\begin{eqnarray*}
   \int_D\ \abs{f_{n,0}(x,y)}\dd x \dd y&\leq&B^{\frac{n}{2}}G_n\xi(\sqrt{B/2},\sqrt{B},m+1),\\
   \int_D \ \abs{f_{n,n}(x,y)}\dd x \dd y&\leq&B^{n}G_n^2\xi(\sqrt{B/2},\sqrt{B/2},m+1),\\
   \int_{\partial D}\abs{f_{n,0}(x,y)\dd x}&\leq&B^{\frac{n}{2}}G_n\eta(\sqrt{B/2},\sqrt{B},m+1),\\
   \int_{\partial D}\abs{f_{n,2r-1}(x,y)\dd x}&\leq&B^{\frac{n-1}{2}+r}G_nG_{2r-1}\eta(\sqrt{B/2}, \sqrt{B/2},m+1),
\end{eqnarray*}
with
\begin{eqnarray*}
	\xi(x,y,\nu)&\equiv& \frac{\pi}{4xy}\left[\erfc(x)\erfc(\nu y)+\erfc(y)\erfc(\nu x)-\erfc(\nu x)\erfc(\nu y)\right],\\
	\eta(x,y,\nu)&\equiv& \frac{\sqrt{\pi}}{2x}\left[\erfc(x)\ee^{-(\nu y)^2}+\erfc\left( \nu x\right)\ee^{-y^2}-\erfc(\nu x)\ee^{-(\nu y)^2}\right].
\end{eqnarray*}
Therefore we have
\begin{eqnarray*}
	\abs{R_{np,np} }
	&\leq& \tilde{R}_{np,np}^{H}\\
	&\equiv& T_{n,p}B^{\frac{n}{2}}G_n\left( 2(2p+1)\xi(\sqrt{B/2},\sqrt{B},m+1)+\eta(\sqrt{B/2},\sqrt{B},m+1)\phantom{\sum_{r=1}^{\hn}}\right.\nonumber\\
   &&\left.+2\sum_{r=1}^{\hn}T_{2r,p}B^{r-\frac{1}{2}}G_{2r-1}\eta(\sqrt{B/2}, \sqrt{B/2},m+1)+T_{n,p}B^{\frac{n}{2}}G_n\xi(\sqrt{B/2},\sqrt{B/2},m+1)\right),
\end{eqnarray*}
where $n>1$, $p\geq 0$, and $m\geq0$. One can achieve any given accuracy by chooing $m$, $n$, $p$ properly. Some values of $\tilde{R}_{np,np}^H$ are listed in Table~\ref{tb_sw2_1}.

\begin{table}[!hbt]
	\centering
   \begin{tabular}{ccccccccccccccc}
 	\hline
	\diagbox{$p$}{$n$} &$ 3 $&$ 5 $&$ 7 $&$ 9 $&$ 11 $&$ 13 $&$ 15 $\\
	\hline
	$0 $&$ 2.2\times 10^{-2} $&$ 4.7\times 10^{-3} $&$ 1.6\times 10^{-3} $&$ 7.2\times 10^{-4} $&$ 4.0\times 10^{-4} $&$ 2.6\times 10^{-4} $&$ 2.0\times 10^{-4}$\\
	$1 $&$ 5.8\times 10^{-3} $&$ 2.6\times 10^{-4} $&$ 2.1\times 10^{-5} $&$ 2.3\times 10^{-6} $&$ 3.1\times 10^{-7} $&$ 5.1\times 10^{-8} $&$ 9.6\times 10^{-9}$\\
	$2 $&$ 3.2\times 10^{-3} $&$ 5.8\times 10^{-5} $&$ 1.9\times 10^{-6} $&$ 9.0\times 10^{-8} $&$ 5.3\times 10^{-9} $&$ 3.8\times 10^{-10} $&$ 3.2\times 10^{-11}$\\
	$3 $&$ 2.1\times 10^{-3} $&$ 2.1\times 10^{-5} $&$ 3.7\times 10^{-7} $&$ 9.5\times 10^{-9} $&$ 3.1\times 10^{-10} $&$ 1.2\times 10^{-11} $&$ 5.7\times 10^{-13}$\\
	$4 $&$ 1.6\times 10^{-3} $&$ 9.7\times 10^{-6} $&$ 1.1\times 10^{-7} $&$ 1.7\times 10^{-9} $&$ 3.5\times 10^{-11} $&$ 8.7\times 10^{-13} $&$ 2.6\times 10^{-14}$\\
 	\hline
   \end{tabular}
   \caption{$\tilde{R}_{np,np}^H$ for square well with $B=1$ and $m=0$.}\label{tb_sw2_1}
\end{table}

\section{Conclusion} 
In this paper, a modified Euler-Maclaurin summation formula (MEMF) is introduced by working out part of the Fourier expansion of the remainder term in the original Euler-Maclaurin formula, which is also generalized to 2D. The MEMF includes both the original Euler-Maclaurin formula and the Poisson summation formula as special cases, and the remainder term of the MEMF can well be under control when the parameters $m, n, p$ are properly chosen, so that it can be used as a practical numerical summation formula. Approximate expression of the partition functions of square well in 1D and 2D, and that of a quantum rotator are obtained with the help of the MEMF.
\section{Acknowledgments}
The work was supported by the ``Qinggu project'' in Tianjin University and by the NSFC under the Grant No.s 11547043, 11705125. We are grateful to Dr. Wu-Sheng Dai, Dr. Mi Xie,  Dr. Yong Zhang, Dr. Kailiang Lin, and Dr. Minghua Lin for helpful discussions.
\bibliographystyle{spphys.bst}
\bibliography{ref}

\end{document}